\let\NAT@parse\undefined
\newsavebox{\ieeealgbox}
\newtheorem{amsthm}{Theorem}
\newtheorem{amslem}{Lemma}
\title{MAC address anonymization for crowd counting} 
\author{Jean-Fran\c cois~Determe$^*$,
	Sophia~Azzagnuni$^*$,
	Fran\c cois~Horlin$^*$,\\
	and~Philippe~De~Doncker
	\thanks{All authors are with the OPERA Wireless Communications Group, Université libre de Bruxelles, 1050 Brussels, Belgium. Corresponding e-mail: Jean-Francois.Determe@ulb.be. Innoviris funded Jean-François Determe. }
}
\begin{document}
\maketitle

\begin{abstract}
Research has shown that counting WiFi packets called probe requests (PRs) implicitly provides a proxy for the number of people in an area. In this paper, we discuss a crowd counting system involving WiFi sensors detecting PRs over the air, then extracting and anonymizing their media access control (MAC) addresses using a hash-based approach. This paper discusses an anonymization procedure and shows time-synchronization inaccuracies among sensors and hashing collision rates to be low enough to prevent anonymization from interfering with counting algorithms. In particular, we derive an approximation of the collision rate of uniformly distributed identifiers, with analytical error bounds.
\end{abstract}

\section{Introduction} \label{sec:intro}
Many an event organizer deals with crowd monitoring and management \cite{martella2017current}. Recently, works from different teams proposed crowd counting systems using WiFi signals \cite{uras2020pma, determe2020forecasting, singh2020crowd}. These works describe counting systems detecting special control packets of the WiFi protocol: probe requests (PRs). Such packets are periodically transmitted by WiFi user terminals to detect nearby access points. Therefore, a PR-based counting system eludes the need for user cooperation and the need for an active WiFi connection from terminals to access points within range.

Typically, several WiFi sensors are deployed over the monitored area to detect PRs---and then extract and anonymize their media access control (MAC) addresses. Sensors finally timestamp anonymized PRs and transmit them to a central server processing them jointly. The number of distinct PRs acquired during a time frame of $T$ seconds (with $T = 60 s$ in this paper) implicitly provides a \textit{rate of PR transmission}, which is proportional (in average) to the number of attendees (as shown experimentally in \cite{determe2020forecasting} and theoretically in \cite{determe2022monitoring}). The proportionality between what is measured (the rate of PR transmission) and what is interesting to event organizers (the number of attendees) is referred to as the \textit{extrapolation factor} in our previous works and is determined experimentally.

Only in circumstances where occupation varies significantly in 60 seconds is our system less accurate (because it averages probe requests over a time frame of one minute, thereby smoothing any occupancy change occurring over a one minute time frame). However, both indoor and outdoor measurements in \cite{determe2022monitoring, determe2020forecasting} indicate it does not seem common for events or buildings hosting at least a few hundreds of individuals, probably because they enter and leave monitored areas at different times and also because entry and exit points have limited flow capacity.

\subsection{A short description of the monitoring system architecture}

\begin{figure}
	\centering
	\includegraphics[scale=0.70]{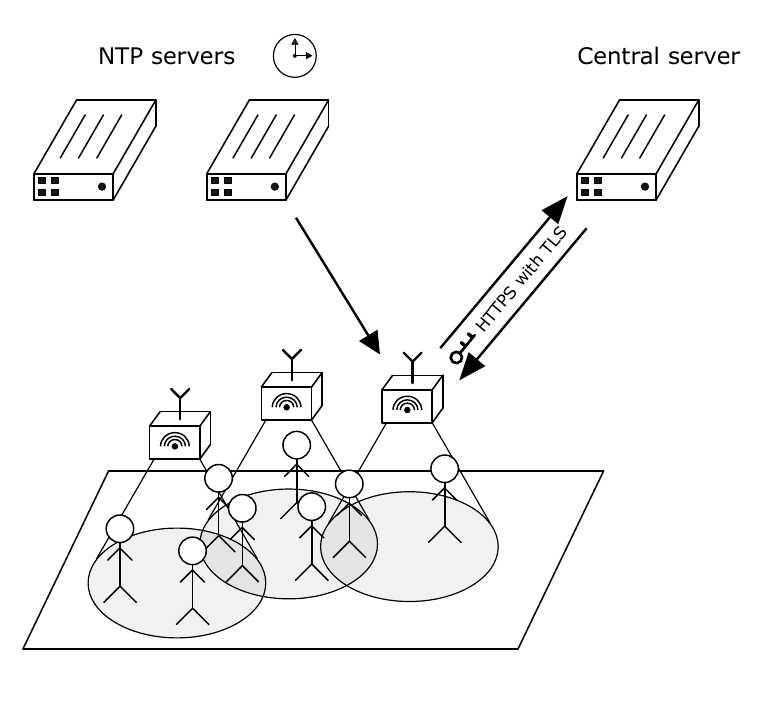}
	\caption{Scheme of the PR sensing procedure. Three WiFi sensors with overlapping ranges detect WiFi probe requests emitted by the smartphones of individuals. The shaded ellipses and the associated cones depict sensor detection ranges. Each sensor uses HTTPS links to periodically retrieve server peppers from the central server and uses another HTTPS link to upload anonymized PRs. Time synchronization is achieved by calibration with NTP servers. Communication links are depicted for only one sensor, to avoid clutter.}
	\label{fig:globalScheme}
\end{figure}

Figure~\ref{fig:globalScheme} depicts the experimentally validated counting scheme in \cite{determe2020forecasting, singh2020crowd}. Sensors (three in Figure~\ref{fig:globalScheme}) monitor an area and, because their effective detection range is not known precisely (it depends on the propagation environment and decreases as the density of people increases because of body-induced attenuation), they are usually installed densely enough to make detection ranges overlap. Data transfers between sensors and the central server are secured using hypertext transfer protocol secure (HTTPS) connections (with transport layer security (TLS)) so that the traffic is encrypted and the identity of the central server is verified---the latter preventing man-in-the-middle attacks. Sensors synchronize theirs clocks using network time protocol (NTP) servers.

\subsection{Collected data and the anonymization procedure} \label{subsec:dataAnonymization}

As depicted in Figure~\ref{fig:anonymProcedureSensor}, sensors extract three key data from each PR: i) a timestamp (whose precision is of one second), ii) a received signal strength indicator (RSSI) in dBm and iii) a source address (SA) (MAC address). Although some smartphones randomize the SAs embedded in PRs, it is not guaranteed and we want user tracking  to remain impossible, even without terminal-side SA randomization. Thus, we transform the original SA into an \textit{SA identifier}, which is its anonymous counterpart.

To generate an \textit{SA identifier} from an SA, we use a SHA-256 hash function in conjunction with a pepper and truncate its ouput to 64 bits. With $\lbrace 0, 1 \rbrace^\gamma$ denoting the set of all binary sequences of $\gamma$ bits, our anonymization function is $h: \mathcal{X} \rightarrow \lbrace 0, 1 \rbrace^{64}$, which is a truncated SHA-256 hash function whose inputs are 48-bit SAs ($\mathcal{X} = \lbrace 0, 1 \rbrace^{48}$). Note that generating SA identifiers of 64 bits is advantageous as such binary sequences can be easily stored as long integers in most databases (e.g., using the standard SQL \texttt{BIGINT} data type).

We prepend a time-varying pepper to every MAC address before hashing it. With $||$ denoting the concatenation operation, and \texttt{mac\_address} and \texttt{global\_pepper} representing respectively the MAC address (i.e., the SA) to be anonymized and the pepper prepended, $h(\texttt{global\_pepper} || \texttt{mac\_address})$ generates the SA identifier. 

\begin{figure}[h]
	\centering
	\includegraphics[scale=0.65]{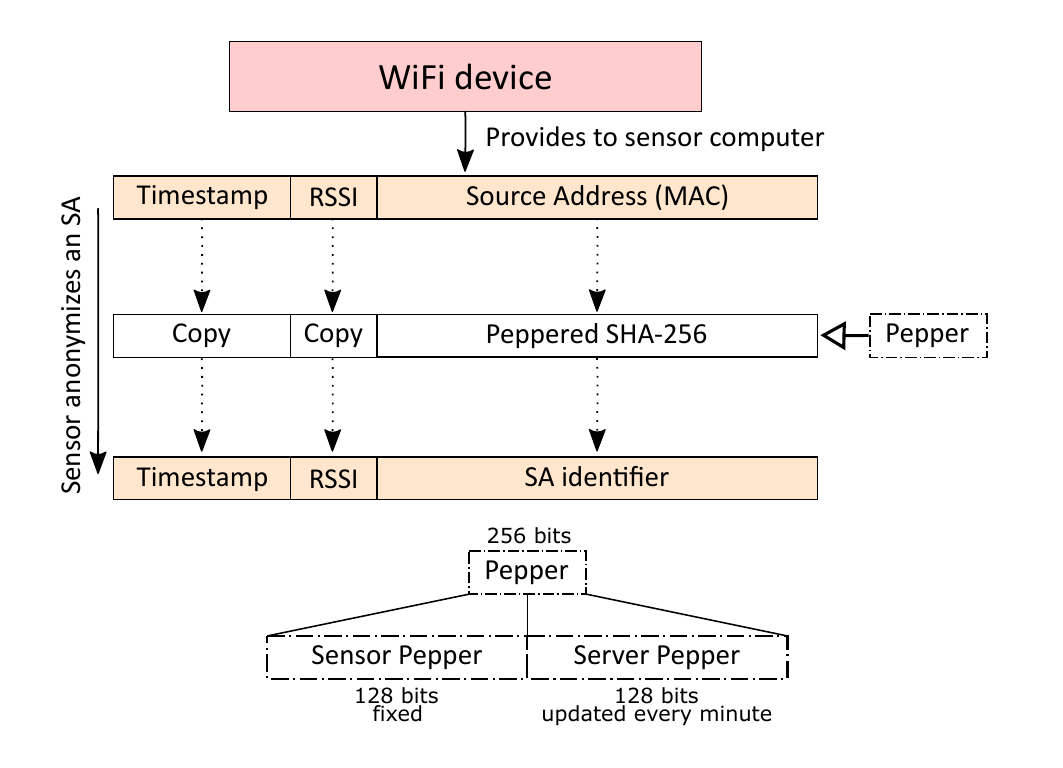}
	\caption{(From \cite{determe2022monitoring}) Scheme of the anonymization procedure executed by sensors.}
	\label{fig:anonymProcedureSensor}
\end{figure}

The pepper consists in a concatenation of a fixed 128-bit \textit{sensor pepper} and a time-varying 128-bit \textit{server pepper}. The central server maintains an up-to-date array of 20 server peppers for a duration of 20 minutes that sensors periodically fetch using an HTTPS link with transport security layer (TLS). Sensors use each server pepper for a specific \textit{one-minute time frame}. Server peppers are generated using a pseudo random number generator (PRNG) (e.g., \texttt{/dev/urandom} or \texttt{/dev/random} on Linux). If this PRNG is deemed not secure (see \cite{dodis2013security}), hardware PRNG generators are alternatives \cite{stipvcevic2007quantum, zheng20196}.

The server and the sensors delete server peppers once they become outdated---in particular, the sensors erase the volatile memory chunk storing server peppers before updating it with new peppers periodically retrieved from the server.

The fixed sensor pepper forms a last line of defense in case the server peppers get compromised. It is written in a file or in the codebase of the sniffer, and it is never stored on the server. We proposed a fixed sensor pepper but storing pregenerated sensor peppers for time frames of one minute is possible too; it would represent 42 MB of data for five years.

Loosely speaking, the time-varying and eventually forgotten pepper has a high entropy and breaking the anonymization scheme is about finding out its value for all one-minute time frames of interest. As we show thereafter, this procedure is not computationally tractable. We also explain why SA identifiers generated using different peppers cannot be compared against one another, thereby precluding user tracking. Moreover, despite the data distortion that our anonymization procedure entails, we also demonstrate that it does not affect in any significant way the output of our counting method (a procedure that we explained in more details in \cite{determe2020forecasting, determe2022monitoring}). Intuitively, anonymization cannot affect counting if the SA identifier of any SA is identical across all sensors at (almost) all time instants.

It is also possible for $h: \mathcal{X} \rightarrow \lbrace 0, 1 \rbrace^{64}$ to output (random) tokens, instead of being a truncated cryptographic hash function. In this case, the outputted tokens are truly uniformly distributed in the space $\lbrace 0, 1 \rbrace^{64}$. The associated tokens should be kept in volatile memory (as well as the corresponding inputs) for a given anonymization window but can be wiped out once a new anonymization window begins. This random-token approach is typically well suited to a central and final anonymization round. It would not be practical to carry it out on a distributed network because all nodes should then agree on a mapping from input SAs to tokens in real time.

\subsection{Contributions}

Previous sections review the crowd monitoring system used in \cite{determe2020forecasting, singh2020crowd} for forecasting purposes and presented  in \cite{determe2022monitoring} (with fewer details about anonymization than in this manuscript). This paper discusses the strength of our anonymization procedure and the effect of time synchronization inaccuracies on it. Besides the proposal of the anonymization process, our contributions also include the demonstration that our system satisfies the following four requirements:
\begin{enumerate}
	\item It is computationally intractable to recover the original MAC addresses from the anonymous identifiers our system generates.
	\item Anonymous identifiers from two distinct one-minute time frames cannot be compared against one another, which entails the impossibility to track individuals over time.
	\item The proportion of time instants during which two sensors of our system could generate distinct anonymous identifiers for the same MAC address is negligible.
	\item Assuming WiFi devices in an area generate $10^7$ distinct MAC addresses within one minute in a monitored area, the collision rate of our anonymization procedure is lower than $10^{-9}$. The value of $10^7$ distinct MAC addresses corresponds roughly to an event of a few million people, which is comparable to or higher than the number of attendees of the vast majority of public events in the world.
\end{enumerate}

Requirements 1) and 2) guarantee privacy, in that the original MAC addresses of devices cannot be recovered and also because tracking individuals is impossible. Requirements 3) and 4) enable the central server to compute accurate attendee counts. Should Requirement 3) not be met, sensors would too often return different SA identifiers for identical devices simultaneously detected (because of overlapping detection ranges), thereby inducing a positive counting bias. Requirement 4) ensures a negligible probability of two devices being identified as a single one (which would imply a negative counting bias).

Proving our system meets Requirement 4) is overwhelmingly a mathematical effort that is based on mathematical approximations of the collision rate of hash functions. This is the most complex result to derive in this paper and, due to its general nature, the theorem approximating the collision rate could be of interest to researchers pursuing other endeavors than the design of a crowd counting system.

\subsection{Comparison with the state of the art} \label{sec:soa}

The authors of \cite[Sec.~5]{demir2014analysing} succinctly mentioned using random binary sequences appended to the MAC addresses prior to hashing (or to replace MAC addresses with tokens, more specifically, universally unique Identifiers (UUIDs) \cite{leach2005universally}). Our anonymization scheme uses a similar idea, except that we prepend random sequences a central server partially generates and then shares with time-synchronized sensors. Each sequence is used simultaneously by all our sensors during one minute, a time after which the server and the sensors erase it. Thus, brute force attacks consist in recovering a pepper of high entropy instead of hashed MAC addresses, whose entropy is too low to withstand such attacks \cite{demir2014analysing, demir2017pitfalls, marx2018hashing}. We also split peppers into two parts (which \cite{demir2014analysing} does not propose), with one unknown to the server.

In \cite{fuxjaeger2016towards}, the authors develop a system similar to ours but for road traffic monitoring. Their anonymization scheme \cite[Sec.~VI]{fuxjaeger2016towards} relies on a truncation of the MAC address prior to hashing, whereas we rely on time-varying peppers of sufficiently high entropy to ensure anonymity and prevent brute-force attacks. Based on their experiments, it is unclear whether their anonymity scheme based on MAC address truncation would yield unacceptably high collision rates for large-scale crowds. 

The very recent work \cite{ali2020practical} presents research similar to ours. \cite{ali2020practical} derives the collision rate we present in Theorem~\ref{thm:exactCollProp} \cite[Sec.~4.2]{ali2020practical} and also justifies the interest of such a derivation within the framework of WiFi and Bluetooth signal detection for crowd counting. They also validate Theorem~\ref{thm:exactCollProp} numerically for a number of MAC addresses lower than $2 \; 10^5$ and for a number of output bits after hashing of up to 24 bits \cite[Sec.~1]{ali2020practical}. Thanks to our precise approximation of the collision rate (see Theorem~\ref{thm:approxThm}), we can handle vastly higher values (e.g., a number of output bits of 64 bits and $10^7$ MAC addresses). Moreover, our method is based on secret peppers that are forgotten and that are split in two parts: one stored on sensors and the other stored on a central server, so that if either the sensors or central server are comprised, anonymity still holds (see Section~\ref{subsec:req1}). The time-varying nature of our peppers also makes it impossible to track individuals (see Section~\ref{subsec:req2}). We also discuss the impact of typical time synchronization errors on modern networks and find them to have no significant impact on the counting process we used in \cite{determe2020forecasting, singh2020crowd} (see Section~\ref{subsec:req3}). Finally, we point out that our (novel) approximation of the collision rate (and its analytical error bounds) are non-trivial mathematical results to derive (see Section~\ref{subsec:req3} and the Appendix).

Other related works on crowd counting using WiFi probe requests are \cite{hong2018crowdprobe} and \cite{potorti2018localising}. In particular, \cite{hong2018crowdprobe} discusses smartphone-executed MAC address randomization and its impact on crowd counting algorithms. The authors also propose a 	method for generating fingerprints that allow them to track individuals whose smartphones emit PRs (a possibility that our system precludes on purpose for privacy reasons). The work \cite{potorti2018localising} deals with user positioning, especially in indoor environments and for non-dense crowds. They notably improve positioning accuracy by leveraging signal strength indicators.
\subsection{Outline}

Section~\ref{sec:intro} has detailed the way our system works, with Section~\ref{sec:soa} comparing our results against the state of the art. Then, Section~\ref{sec:fourRequirements} shows that our four requirements are met. Finally, Section~\ref{sec:conclusion} is the conclusion. The Appendix contains mathematical proofs.

\section{Results} \label{sec:fourRequirements}
We now turn to our contribution: proving our four requirements are met by the already existing crowd counting system presented in \cite{determe2020forecasting, singh2020crowd, determe2022monitoring}. We insist again that these results are new and not detailed in \cite{determe2020forecasting, singh2020crowd, determe2022monitoring}.

The data collection process is a means to an end: make it possible to count the number of people visiting an area while ensuring their privacy. In other words, the need for satisfying the four requirements is about ensuring two properties: privacy and accurate counting. The first two requirements address the former: how to ensure the privacy of users is preserved and tracking them (even anonymously) is impossible? The penultimate and last requirements deal with the second property: how to ensure that our privacy-enhancing data distortion does not affect counting accuracy? The next subsections detail our four requirements and show how our system satisfies them.

\subsection{Requirement 1:  impossibility to recover the original SA from SA identifiers} \label{subsec:req1}

Cryptographic hash functions like SHA-256 cannot be directly reversed---in practice, reversing consists in trying inputs until finding one whose hash is the output to be reversed. It is possible for an attacker to know the input MAC address of a particular entry in the list of anonymized PRs; for example, an attacker may go near sensors and send fake PRs with precise timing patterns that make it easy to identify them. In this case, brute forcing the pepper entails testing many of the 256-bit sequences that exist (on average, half of them should be tested). Attackers usually perform this operation using graphical processing units (GPUs), field-programmable gate arrays (FPGAs), or, if they have large resources, application-specific integrated circuits (ASICs). Let us examine if this attack is feasible with GPUs.

For example, 1 million Nvidia RTX 2080 SUPER Founders Edition graphics cards can compute roughly 5700 SHA-256 TeraHashes per second \cite{rtx2080fesha256benchmark}---this implies that testing all 256-bit peppers (approximately $1.16\; 10^{65}$ TeraHashes) takes $2.04 \; 10^{61}$ seconds, i.e., $6.47 \; 10^{53}$ years. Should one of the two 128-bit peppers be known to an attacker, testing all 128-bit sequences still takes roughly $1.90 \; 10^{15}$ years. We point out that relying on a regular SHA-256 hash function without peppers is not safe (see \cite{demir2014analysing, marx2018hashing} and \cite[Sec.~VI]{demir2017pitfalls}) as the entropy of MAC addresses is too low to resist brute force attacks. We also highlight that using computationally intensive hashes like \textit{bcrypt} \cite{provos1999future} and \textit{Argon2} \cite{biryukov2016argon2} would imply unreasonable computational requirements for sensors (see also \cite[Sec.~5]{demir2014analysing}).

\subsection{Requirement 2: preventing tracking for more than one minute} \label{subsec:req2}

This requirement is linked to server peppers being updated between consecutive time frames of one minute. In particular, the avalanche effect of SHA-256 hash functions makes hashing with different peppers return incomparable SA identifiers for any fixed MAC address. (The avalanche effect of cryptographic hash functions is the fact that minor changes in the input significantly change the hash.)

\subsection{Requirement 3: peppers are identical across all sensors at a given time instant} \label{subsec:req3}

This requirement depends on the accuracy of time synchronization. We propose to use network time protocol (NTP), which implies accurate time synchronization on low-latency networks (e.g., 4G networks, with timing errors lower than 10 ms \cite{mivskinis2014timing}). There could be synchronization-related mismatches at the frontiers of consecutive one-minute time frames but only for 20 ms/60000 ms = 0.033 \% of their duration. Assuming probe request transmission times are uniformly distributed in time, this figure translates into having on average 0.033 \% of all PRs being anonymized by different peppers on the sensors.

\subsection{Requirement 4:  a collision rate of less than $10^{-9}$ for $10^7$ MAC addresses} \label{subsec:req4}

We now derive estimates of the collision rate of truncated hash functions. The first part of this section is mathematical while the second part leverages the results of the first one to show the collision rate achieved by our system to be negligible for up to 10 million SAs.

\subsubsection{Mathematical foundations}

Variable $m$ denotes a number of possible outputs, such that $\log_2(m) \in \mathbb{N}$, and  $\lbrace 0, 1 \rbrace^\gamma$ denotes the set of all binary sequences of $\gamma$ bits. We consider a function $h: \mathcal{X} \rightarrow \lbrace 0, 1 \rbrace^{\log_2(m)}$ (with $n := \mathrm{card}(\mathcal{X})$). Hereafter, $h$ is a hash function, whose output is approximately uniformly distributed in $\lbrace 0, 1 \rbrace^{\log_2(m)}$~\cite[Sec.~9.7.1]{menezes1996handbook}. It could also be a token generator, in which case the uniform distribution assumption is exactly satisfied.

We follow the standard terminology in the study of hash tables and refer to $m$ and $n$ as the \textit{number of buckets} and the \textit{number of inserts}, respectively. Similarly, $\alpha := n/m$ is called the \textit{load factor}. Finally, $Y^{(n,m)}$ denotes the (random) number of collisions when inserting $n$ values into $m$ buckets (with the uniform distribution assumption). Theorem~\ref{thm:exactCollProp} provides an exact---yet numerically unstable---formula of $\mathbb{E} \left\lbrack Y^{(n,m)} \right\rbrack$.

\begin{amsthm} \label{thm:exactCollProp}
	For $n$ inserts into $m$ buckets, the collision rate, $\mathbb{E} \lbrack Y^{(n,m)}\rbrack/n$, is
	\begin{equation} \label{eq:exactCollRate}
		\dfrac{\mathbb{E} \left\lbrack Y^{(n,m)} \right\rbrack}{n} = 1 - \dfrac{m}{n} \left( 1 - \left( \dfrac{m-1}{m} \right)^n \right),
	\end{equation}
	where the uniform distribution assumption has been used.
\end{amsthm}
\begin{proof}
	See the Appendix.
\end{proof}

As shown in Figure~\ref{fig:HashCollisionApproxResEq1}, (\ref{eq:exactCollRate}) suffers from numerical instabilities for sufficiently low values of the load factor. Therefore, for systems whose load factors are too low for (\ref{eq:exactCollRate}) to provide accurate estimates, approximations are needed. In particular, to ensure such approximations are accurate enough, they should have proven analytical error bounds. Theorem~\ref{thm:approxThm} proposes three approximations of $\mathbb{E} \lbrack Y^{(n,m)}\rbrack/n$, with proven error bounds. Only the penultimate and last inequalities of Theorem~\ref{thm:approxThm} are numerically stable.

\begin{figure}[h]
	\centering
	\includegraphics[scale=0.75]{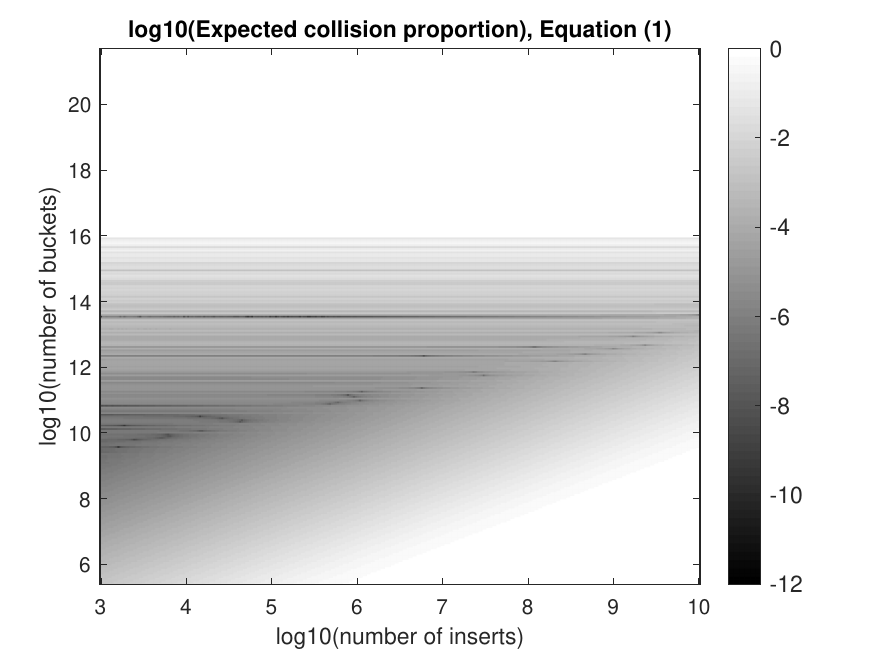}
	\caption{Numerically computed value of $\log_{10} (\mathbb{E} \left\lbrack Y^{(n,m)} \right\rbrack / n)$ (using (\ref{eq:exactCollRate})) in Matlab R2019a as a function of the number of inserts $n$ and the number of buckets $m$. With $\log_{10}(n) \geq 3$, numerical instabilities appear for values of $\log_{10}(m)$ as low as $9$.}
	\label{fig:HashCollisionApproxResEq1}
\end{figure}

\begin{amsthm} \label{thm:approxThm}
	For a degree of approximation $K \geq 2$, a number of inserts $n \geq 2$, and a load factor $\alpha \leq 1$, there exist error terms $\delta(\alpha, n)$ and $R_{K-1}(\alpha)$ such that
	\begin{align}
		\dfrac{\mathbb{E} \left\lbrack Y^{(n,m)} \right\rbrack}{n} & = 1 - \alpha^{-1} \left( 1 - \exp(-\alpha) \right) + \delta(\alpha, n) \label{eq:1stapprox} \\
		& = \sum_{k = 1}^{K-1} \dfrac{\alpha^k (-1)^{k+1}}{(k+1)!} + \delta(\alpha, n) + R_{K-1}(\alpha) \label{eq:2ndapprox} \\
		& = \dfrac{\alpha}{2} + \delta(\alpha, n) + R_1(\alpha), \label{eq:3rdapprox}
	\end{align}
	where
	\begin{equation} \label{eq:deltaIneq}
		- \sqrt{\dfrac{\alpha^2}{n^2 - \alpha^2}  \left(\dfrac{\pi^2}{6} - 1\right) } \leq \delta(\alpha, n) \leq 0,
	\end{equation}
	\begin{equation} \label{eq:RIneq}
		|R_{K-1}(\alpha)| \leq \dfrac{\alpha^{K}}{(K+1)!},
	\end{equation}
	and, in particular,
	\begin{equation} \label{eq:ROneTermRelativeIneq}
		\dfrac{|R_1(\alpha)|}{\alpha/2} \leq \dfrac{\alpha}{3}.
	\end{equation}
\end{amsthm}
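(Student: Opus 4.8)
The plan is to obtain every line of the statement from the exact formula of Theorem~\ref{thm:exactCollProp} by writing $\alpha = n/m$ and separating two sources of error: the replacement of the exact base $\left(\frac{m-1}{m}\right)^n$ by its exponential surrogate $\exp(-\alpha)$, which I would absorb into $\delta(\alpha,n)$, and the truncation of the Taylor series of that surrogate, which I would absorb into $R_{K-1}(\alpha)$. Concretely, I would define $\delta(\alpha,n) := \alpha^{-1}\left[\left(\frac{m-1}{m}\right)^n - \exp(-\alpha)\right]$, so that substituting into Theorem~\ref{thm:exactCollProp} makes \eqref{eq:1stapprox} an identity; the whole problem then reduces to bounding $\delta$ and expanding $1 - \alpha^{-1}(1 - \exp(-\alpha))$.

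For \eqref{eq:deltaIneq}, I would work with $S := n\ln\!\left(\frac{m-1}{m}\right) = -n\sum_{k\ge 1} \frac{1}{k m^k}$. The upper bound $\delta \le 0$ follows immediately from $\ln(1 - 1/m) \le -1/m$, which gives $S \le -\alpha$ and hence $\left(\frac{m-1}{m}\right)^n = e^S \le e^{-\alpha}$. For the lower bound, I would set $T := -(S+\alpha) = n\sum_{k\ge 2}\frac{1}{k m^k} \ge 0$ and factor $\delta = \frac{e^{-\alpha}}{\alpha}(e^{-T}-1)$; the elementary inequality $e^{-T} \ge 1 - T$ together with $e^{-\alpha}\le 1$ yields $\delta \ge -T/\alpha$. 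It remains to bound $T/\alpha = \sum_{j\ge 1}\frac{1}{(j+1)m^j}$, and here is the crux: applying Cauchy--Schwarz with the splitting $\frac{1}{(j+1)m^j} = \frac{1}{j+1}\cdot\frac{1}{m^j}$ gives $T/\alpha \le \sqrt{\sum_{j\ge1}\frac{1}{(j+1)^2}}\,\sqrt{\sum_{j\ge1}\frac{1}{m^{2j}}}$. Recognizing the first factor as the Basel tail $\sum_{i\ge 2} i^{-2} = \frac{\pi^2}{6}-1$ and the second as the geometric sum $\frac{1}{m^2-1}$, and using the identity $\frac{1}{m^2-1} = \frac{\alpha^2}{n^2-\alpha^2}$, reproduces exactly the bound in \eqref{eq:deltaIneq}.

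For \eqref{eq:2ndapprox} and \eqref{eq:RIneq}, I would Taylor-expand $\exp(-\alpha)$ and simplify $1 - \alpha^{-1}(1-\exp(-\alpha)) = \sum_{k\ge 1}\frac{(-1)^{k+1}\alpha^k}{(k+1)!}$, so that $R_{K-1}(\alpha)$ is precisely the tail $\sum_{k\ge K}\frac{(-1)^{k+1}\alpha^k}{(k+1)!}$. Since $\alpha \le 1$ forces the magnitudes $\frac{\alpha^k}{(k+1)!}$ to be strictly decreasing in $k$, the Leibniz (alternating-series) estimate bounds the tail by its first omitted term $\frac{\alpha^K}{(K+1)!}$, giving \eqref{eq:RIneq}. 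Specializing to $K = 2$ keeps only the $\frac{\alpha}{2}$ term and yields \eqref{eq:3rdapprox}; the same alternating-series bound gives $|R_1(\alpha)| \le \frac{\alpha^2}{6}$, and dividing by $\alpha/2$ gives \eqref{eq:ROneTermRelativeIneq}.

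The main obstacle is the lower bound in \eqref{eq:deltaIneq}: turning the crude series estimate for $T$ into the stated closed form requires spotting the Cauchy--Schwarz split that simultaneously produces the Basel constant $\frac{\pi^2}{6}-1$ and a geometric series summing to $\frac{1}{m^2-1} = \frac{\alpha^2}{n^2-\alpha^2}$. Everything else is routine: the sign of $\delta$, the Taylor expansion, and the alternating-series remainder estimates.
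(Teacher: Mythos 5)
Your proposal is correct and is essentially the paper's own argument: your treatment of $\delta$ --- expanding $n\log(1-1/m)$, splitting off $-\alpha$, using $e^{-T}\ge 1-T$ together with $e^{-\alpha}\le 1$, and bounding the leftover series $\sum_{j\ge 1}\frac{1}{(j+1)m^j}$ via the Cauchy--Schwarz split into $\sqrt{\left(\frac{\pi^2}{6}-1\right)\frac{\alpha^2}{n^2-\alpha^2}}$ --- coincides step for step with the paper's Lemma~\ref{lem:approxExpLimit} and its use in the main proof (the paper merely applies the two elementary inequalities in the opposite order, first bounding the series by Cauchy--Schwarz and then invoking $1-x\le e^{-x}$). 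The only departures are minor: you justify $|R_{K-1}(\alpha)|\le \alpha^{K}/(K+1)!$ by the alternating-series (Leibniz) estimate instead of the Taylor's-theorem derivative bound of Lemma~\ref{lem:approxExpSeries}, obtaining the identical constant, and your reading of \eqref{eq:3rdapprox} as the $K=2$ case with remainder $R_1$ is the consistent one, since the printed $R_2$ there is evidently a typo given \eqref{eq:ROneTermRelativeIneq} and the surrounding discussion.
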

\begin{proof}
	See the Appendix.
\end{proof}

\subsubsection{The interpretation of Theorem~\ref{thm:approxThm}}

Theorem~\ref{thm:approxThm} approximates the exact value of the collision rate that Theorem~\ref{thm:exactCollProp} provides. Equation~(\ref{eq:1stapprox}) yields a first approximation that is not numerically stable for sufficiently low values of $\alpha$ (a figure similar to Figure~\ref{fig:HashCollisionApproxResEq1} can be easily generated for (\ref{eq:1stapprox}) but has been omitted for the sake of brevity). Equation~(\ref{eq:2ndapprox}) provides a numerically stable approximation whose precision is controlled through $K$, hence the name ``\textit{degree of approximation}''.

The error term $\delta(\alpha,n)$ quantifies to what extent $\left( 1 - \alpha / n \right)^n$ accurately approximates $\exp(- \alpha)$. The term $R_{K-1}(\alpha)$ bounds the error tied to approximating $\exp(- \alpha)$ using its $K$th-order Taylor polynomial, an approach used to derive~(\ref{eq:2ndapprox}) from~(\ref{eq:1stapprox}).

For low values of $\alpha$ (e.g., $\alpha \leq 10^{-3}$),~(\ref{eq:3rdapprox}) is an accurate approximation because $|R_1(10^{-3})|/(10^{-3}/2) \leq 10^{-3}/3$ (see (\ref{eq:ROneTermRelativeIneq})), i.e., the error $|R_1(\alpha)|$ is less than 0.1 \% of the approximated value $\alpha/2$. For $\alpha \leq 1$ and for $n$ high enough (say, $n \geq 100$), $\alpha^2/(n^2 - \alpha^2) \simeq \alpha^2/n^2 = 1/m^2$. Thus, with $m \geq 2^{64}$, $|\delta(\alpha, n)| \leq m^{-1} 0.8031 \leq  5\; 10^{-20}$.

\subsubsection{Proving requirement 4 is satisfied} \label{subsubsec:Req4IsMet}

We have $m = 2^{64} \simeq 1.84 \; 10^{19}$, which means that we truncate SHA-256 hashes to 64 bits. This corresponds to a load factor $\alpha = 10^7 (1.84)^{-1} 10^{-19} \simeq 10^{-12}$ for $n = 10^7$ MAC addresses. Figure~\ref{fig:HashCollRes} then shows that the collision rate expectation is approximately equal to $10^{-12.5}$. Note that, for $\alpha$ sufficiently low, (e.g., $\alpha \leq 10^{-3}$), the approximation becomes~(\ref{eq:3rdapprox}), which explains why the level sets in Figure~\ref{fig:HashCollRes} appear to be linear slopes.

We point out that approximation errors are negligible for our choice of parameters. Our load factor $\alpha \simeq 10^{-12}$ implies (for any $K \geq 2$) $|R_{K-1}(\alpha)| \leq 10^{-24}$. Moreover, as already pointed out, $m \geq 2^{64} \implies |\delta(\alpha, n)| \leq 5 \; 10^{-20}$. 

The conclusion is that our estimate of the collision rate expectation is approximately equal to $10^{-12.5}$, with an error upper bounded by $5 \; 10^{-20} + 10^{-24} \simeq 5 \; 10^{-20}$, so that Requirement~4 is met.
\begin{figure}
	\centering
	\includegraphics[scale=0.75]{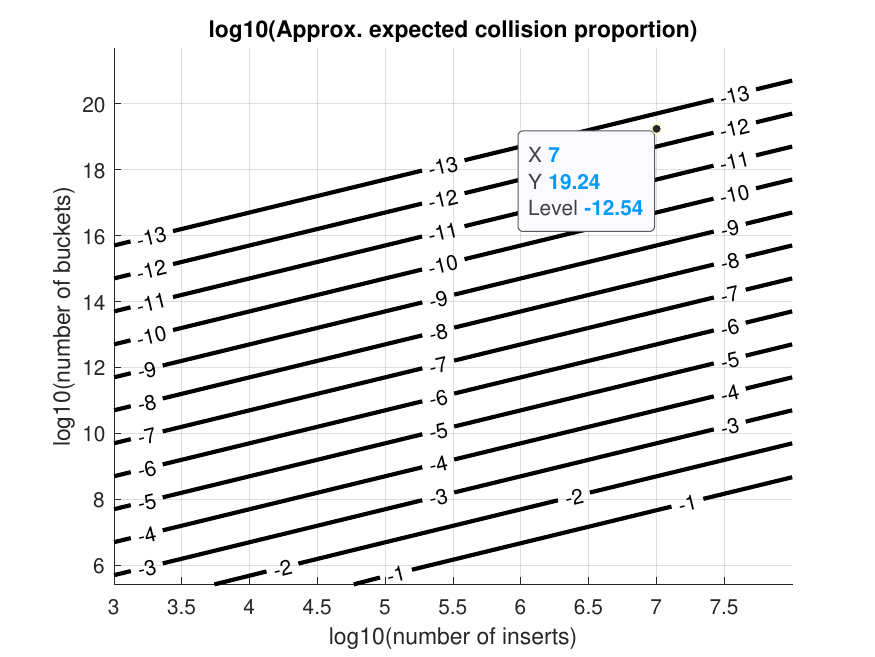}
	\caption{Levels sets of the approximation~(\ref{eq:2ndapprox}) of the collision rate as a function of the number of inserts $n$ and the number of buckets $m$.}
	\label{fig:HashCollRes}
\end{figure}

\subsubsection{Concentration inequality for the collision rate}

While it is interesting to upper bound the expectation of the collision rate, $Y^{(n,m)}/n$, upper bounding the probability that it exceeds some threshold is also a worthy endeavor. We propose such a (coarse) inequality. Because $Y^{(n,m)}/n \geq 0$, we can apply Markov's inequality:
\begin{equation}
	\mathbb{P} \left\lbrack \dfrac{Y^{(n,m)}}{n} \geq a \right\rbrack \leq \dfrac{\mathbb{E} \left\lbrack Y^{(n,m)}/n \right\rbrack}{a}.
\end{equation}

Using Theorem~\ref{thm:approxThm} with $K = 2$, we only know that $\mathbb{E} \left\lbrack Y^{(n,m)}/n \right\rbrack = \alpha/2 + \delta(\alpha, n) + R_1(\alpha)$ where $\delta(\alpha,n) \leq 0$ and $R_1(\alpha) \leq \alpha^2/6$. Therefore, we can only use the slightly more pessimistic concentration inequality that is
\begin{equation}
	\mathbb{P} \left\lbrack \dfrac{Y^{(n,m)}}{n} \geq a \right\rbrack \leq \dfrac{\alpha/2 + \delta(\alpha, n) + R_1(\alpha)}{a} \leq \dfrac{\alpha/2 + \alpha^2/6}{a},
\end{equation}
where the term $\alpha^2/6$ is negligible in comparison to $\alpha/2$ for $\alpha$ sufficiently low (e.g., $\alpha \leq 10^{-3}$).

For example, let us consider again the previous calculation of Section~\ref{subsubsec:Req4IsMet} (with $n = 10^7$ MAC addresses, $m = 2^{64}$ and $\alpha = 10^{-12}$), which yielded $\mathbb{E} \lbrack Y^{(n,m)}/n \rbrack = \alpha/2 + \delta(\alpha, n) + R_1(\alpha) \simeq 10^{-12.5}$. Owing to $\alpha^2/6 \ll \alpha/2$ and with $a = 10^{-9}$,
\begin{equation}
	\mathbb{P} \left\lbrack \dfrac{Y^{(n,m)}}{n} \geq 10^{-9} \right\rbrack \leq  \dfrac{10^{-12.5}}{10^{-9}} = 10^{-3.5} \simeq 3.16\; 10^{-4},
\end{equation}
which shows that, with probability 99.968 \%, the collision rate of our counting system does not exceed $10^{-9}$.

Markov's inequality is coarse (and it may be possible to improve our result using a more sophisticated inequality) but, within the context of upper bounding the collision rate of our crowd counting system for large crowds, that inequality is sufficient to prove its collision rate does not exceed $10^{-9}$ with high probability for large crowds ($10^7$ MAC addresses per minute).

	\subsection{Validating requirement 4 experimentally}
	An interesting future work endeavor would be to validate requirement 4 experimentally and to evaluate how sharp the inequalities we obtained are. In particular, an interesting question is to determine to what the extent truncated SHA-256 hashes are close to being randomly distributed and how a discrepancy from uniformity translates into higher collision rates in our particular application. A conceptually simple analysis of this question could be carried out by generating a statistically significant number of random peppers and, for each pepper, to generate at least $10^{14}$ random SAs to evaluate the empirical collision rate (which we know should be around $10^{-12.54}$ according to Figure~\ref{fig:HashCollRes}, which explains why generating at least $10^{14}$ SAs is statistically sound). Recent simulation results related to this approach are available in \cite[Sec.~5]{ali2020practical}.
	
	Unfortunately, rigorously validating the collision rate experimentally using datasets of true SAs would require to monitor events gathering millions of individuals. Moreover, it would be impossible to know exactly how many people carry smartphones and when each smartphones send PRs. As a result, we propose a slightly weaker variant (that still requires significant efforts). First of all, one needs to identify randomization and PR emission patterns from modern smartphones in a controlled laboratory environment (or use existing results on typical PR generation processes in the literature, see \cite[Fig.~1]{hong2018crowdprobe}). This is equivalent to building a statistical distribution that accurately depicts the random process of modern smartphones generating PRs. Then, the methodology of the previous paragraph can be used with this distribution instead of a uniform one for SAs. The difficulties here mainly are about identifying PR transmission patterns for an extensive set of modern smartphones as well as evaluating what is the market share of each smartphone that is tested.

\section{Conclusion}\label{sec:conclusion}

Within the framework of WiFi-based crowd counting, this paper proposes an anonymization scheme for collected MAC addresses. This anonymization scheme is endowed with four desirable properties. First, it makes the recovery of original MAC addresses computationally intractable. Second, it precludes tracking capabilities. Third, it works properly as long as timing synchronization errors between nodes collecting MAC addresses is of the order of 10 ms, which is typically easy to attain on modern cellular networks. Fourth, it achieves a negligible collision rate between MAC addresses. This last point is supported by ample theoretical evidence. Although this paper is motivated by crowd counting applications, the methods and mathematical results could be of interest in other domains.

\appendix
\section[\appendixname~\thesection]{Proofs}

\renewcommand*{\theamslem}{\thesection\arabic{amslem}} 

In what follows, $\| \boldsymbol{x} \|_2$ denotes the $\ell_2$-norm of vector $\boldsymbol{x}$. The notation $(a_k)_{1 \leq k \leq K}$ is equivalent to the vector $(a_1, a_2, \dots, a_K)$ of size $K$.

\subsection[\appendixname~\thesubsection]{Proof of Theorem~\ref{thm:exactCollProp}}

Let $p_j$ denote the probability that the $j$th ($1 \leq j \leq m$) bucket be empty after $n$ inserts. All inserts have equal probabilities to fall within each bucket and whether an insert ends up in one bucket is independent of which buckets are already occupied. As a result, we have $p_j = ((m-1)/m)^n$. Indeed, for the $j$th bucket to be unoccupied, all $n$ inserts should end up in any of the other $m-1$ buckets and, for each insert, there is a probability $(m-1)/m$ that it ends up in any bucket except the $j$th one. The expectation of the number of empty buckets after $n$ inserts is equal to 
\begin{equation*}
	\sum_{j=1}^m \mathbb{E} \lbrack A_j \rbrack = \sum_{j=1}^m \left( \dfrac{m-1}{m} \right)^n  =  m \left( \dfrac{m-1}{m} \right)^n,
\end{equation*}
where $A_j = 1$ if the $j$th bucket is empty and equals 0 otherwise. Hence, the expectation of the number of occupied buckets is $m - m ((m-1)/m)^n$. Without any collision after $n$ inserts, there are exactly $n$ distinct occupied buckets. However, with $n_l < n$ distinct occupied buckets, there are $n - n_l$ collisions. As the number of collisions is equal to $n - \textrm{``number of occupied bucket''}$ the average number of collisions is $n - m(1 - ((m-1)/m)^n)$ and the proof is complete.
\subsection[\appendixname~\thesubsection]{Lemmas for Theorem~\ref{thm:approxThm}}

To prove Theorem~\ref{thm:approxThm}, we shall first derive two lemmas. Lemma~\ref{lem:approxExpLimit} quantifies to what extent $( 1 - \alpha/n )^n$ is a good approximation of $\exp(- \alpha)$.

\begin{amslem} \label{lem:approxExpLimit}
	For $n \geq 1$ and $\alpha < n$,
	\begin{equation} \label{eq:approxExpLimit}
		\left( 1 - \dfrac{\alpha}{n} \right)^n = \exp(- \alpha) F(\alpha, n),
	\end{equation}
	where
	\begin{equation} \label{eq:FanIneqApproxExpLimit}
		\exp \left( - \alpha^2 \sqrt{\dfrac{1}{n^2 - \alpha^2}  \left(\dfrac{\pi^2}{6} - 1\right) } \right) \leq F(\alpha, n) \leq 1.
	\end{equation}
\end{amslem}
\begin{proof}
	For $0 \leq \alpha/n < 1$, using the Maclaurin series of $\log ( 1 - x ) = - \sum_{k=1}^{\infty} x^k/k$ (valid for $|x| < 1$), we obtain
	\begin{align}
		\left( 1 - \dfrac{\alpha}{n} \right)^n & = \exp \left( n \log \left( 1 - \dfrac{\alpha}{n} \right) \right) \nonumber \\
		& = \exp \left(- n \sum_{k=1}^{\infty} \dfrac{(\alpha/n)^k}{k}  \right) \nonumber \\
		& = \exp \left(- \alpha \left( 1 + \sum_{k=1}^{\infty} \dfrac{(\alpha/n)^k}{k+1} \ \right)  \right) \label{eq:lemApproxExpBuf1},
	\end{align}
	where we have used
	\begin{align*}
		- n \sum_{k=1}^{\infty} \dfrac{(\alpha/n)^k}{k} & = - \sum_{k=1}^{\infty} \dfrac{\alpha^k}{n^{k-1} k} = - \alpha \left( 1 + \sum_{k=2}^{\infty}  \dfrac{\alpha^{k-1}}{n^{k-1} k} \right) =  - \alpha \left( 1 + \sum_{k=1}^{\infty}  \dfrac{\alpha^{k}}{n^{k} (k+1)} \right).
	\end{align*}
	Defining $f^{(K)} (\alpha, n) := \sum_{k=1}^{K} (\alpha/n)^k/(k+1)$, we have, $0 < f^{(1)} (\alpha, n) < f^{(2)} (\alpha, n) < \cdots$ so that if for all $K$, $f^{(K)} (\alpha, n) \leq \xi(\alpha, n)$, then $\sum_{k=1}^{\infty} (\alpha/n)^k/(k+1) \leq \xi(\alpha, n)$. The sum in $f^{(K)} (\alpha, n)$ is the inner product between vectors $((\alpha/n)^k)_{1 \leq k \leq K}$ and $(1/(k+1))_{1 \leq k \leq K}$. Cauchy-Schwarz inequality yields:
	\begin{equation*}
		f^{(K)} (\alpha, n) \leq \sqrt{\left\| \left(\dfrac{\alpha^k}{n^k}\right)_{1 \leq k \leq K} \right\|_2^2 \left\| \left(\dfrac{1}{k+1}\right)_{1 \leq k \leq K} \right\|_2^2}.
	\end{equation*}
	We have, using an asymptotic expression for geometric series,
	\begin{align*}
		\left\| \left(\dfrac{\alpha^k}{n^k}\right)_{1 \leq k \leq K} \right\|_2^2 & = \sum_{k=1}^K \left( \left( \dfrac{\alpha}{n} \right)^k \right)^2 \\
		& =  \sum_{k=0}^K \left( \left( \dfrac{\alpha}{n} \right)^2 \right)^k - 1 \\
		& \leq \sum_{k=0}^{\infty} \left( \left( \dfrac{\alpha}{n} \right)^2 \right)^k - 1 \\
		& = \dfrac{1}{1 - \alpha^2 / n^2} - 1 \\
		& = \dfrac{\alpha^2}{n^2 - \alpha^2}.
	\end{align*}
	Moreover,
	\begin{align*}
		\left\| \left(\dfrac{1}{k+1}\right)_{1 \leq k \leq K} \right\|_2^2 & = \sum_{k = 1}^{K+1} \dfrac{1}{k^2} - 1 \\
		& \leq \sum_{k = 1}^{\infty} \dfrac{1}{k^2} - 1 \\
		& = \zeta(2) - 1,
	\end{align*}
	where $\zeta(2)$ is Riemann zeta function evaluated at 2, which is equal to $\pi^2/6$. Therefore, we may use the upper bound
	\begin{equation}\label{eq:lemApproxExpXiDef}
		\xi(\alpha, n) := \alpha \sqrt{\dfrac{1}{n^2 - \alpha^2}} \sqrt{\dfrac{\pi^2}{6} - 1}.
	\end{equation}
	It is also easy to notice that $\sum_{k=1}^{\infty} (\alpha/n)^k/(k+1) \geq 0$ given that all the terms of the sum are positive.\\
	
	Injecting these results in~(\ref{eq:lemApproxExpBuf1}), we obtain
	\begin{align*}
		\left( 1 - \dfrac{\alpha}{n} \right)^n & = \exp \left( - \alpha \left( 1 + \lim_{K \rightarrow \infty} f^{(K)}(\alpha, n) \right) \right) \\
		& = \exp( - \alpha) \exp \left( - \alpha  \lim_{K \rightarrow \infty} f^{(K)}(\alpha, n) \right) \\
		& = \exp( - \alpha) F(\alpha, n)
	\end{align*}
	where
	\begin{align*}
		F(\alpha, n) \leq & \exp (0) = 1
	\end{align*}
	and
	\begin{align*}
		F(\alpha, n) \geq & \exp \left( - \dfrac{\alpha^2}{\sqrt{n^2 - \alpha^2}} \sqrt{\dfrac{\pi^2}{6} - 1} \right)
	\end{align*}
	because $\lim_{K \rightarrow \infty} f^{(K)}(\alpha, n) \leq \xi(\alpha, n)$ according to~(\ref{eq:lemApproxExpXiDef}).
\end{proof}

We now turn to a lemma focusing on the accuracy of a polynomial approximation of $ \alpha^{-1}(1- \exp(-\alpha))$.

\begin{amslem} \label{lem:approxExpSeries}
	For $0 < \alpha \leq 1$, $K \geq 1$ and $g: \lbrack 0, 1 \rbrack \rightarrow \mathbb{R} : \alpha \mapsto g(\alpha) = \alpha^{-1}(1- \exp(-\alpha))$,
	\begin{equation*}
		g(\alpha) = \sum_{k=0}^{K-1} \dfrac{\alpha^k}{(k+1)!} (-1)^{k} + R_{K-1}(\alpha)
	\end{equation*}
	where
	\begin{equation*}
		|R_{K-1}(\alpha)| \leq \dfrac{\alpha^{K}}{(K+1)!}
	\end{equation*}
\end{amslem}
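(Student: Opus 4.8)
The plan is to obtain the claimed expansion directly from the Maclaurin series of $\exp(-\alpha)$ and then control the tail with the alternating series estimate. Since $\exp$ is entire, I would start from the everywhere-convergent expansion $\exp(-\alpha) = \sum_{j=0}^{\infty} (-1)^j \alpha^j / j!$, subtract it from $1$ to cancel the $j=0$ term, and divide by $\alpha$. After reindexing with $k = j-1$, this yields
\begin{equation*}
	g(\alpha) = \alpha^{-1}\bigl(1 - \exp(-\alpha)\bigr) = \sum_{k=0}^{\infty} \dfrac{(-1)^k \alpha^k}{(k+1)!},
\end{equation*}
valid for all $\alpha$ and in particular on $(0,1)$. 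The finite sum in the statement is then simply the truncation of this series at $k = K-1$, and the remainder is forced to be the tail $R_{K-1}(\alpha) = \sum_{k=K}^{\infty} (-1)^k \alpha^k / (k+1)!$.

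The key step is to bound this tail. I would observe that it is an alternating series with terms $a_k := \alpha^k/(k+1)!$, and that the ratio $a_{k+1}/a_k = \alpha/(k+2)$ is strictly less than $1$ for every $k \geq 0$ because $0 < \alpha < 1 \leq k+2$. Hence $(a_k)$ is strictly decreasing to $0$, so the alternating series estimation theorem applies: a convergent alternating series with monotonically decreasing terms differs from any partial sum by no more than the first omitted term. Applying this with the partial sum ending at $k = K-1$ gives immediately
\begin{equation*}
	|R_{K-1}(\alpha)| \leq a_K = \dfrac{\alpha^{K}}{(K+1)!},
\end{equation*}
which is exactly the claimed bound.

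I expect no serious obstacle: the only hypothesis to verify before invoking the alternating series estimate is the strict monotonicity of $(a_k)$, and that reduces to the elementary inequality $\alpha < k+2$, which holds trivially on the stated domain. If one preferred to avoid quoting the alternating series theorem, an equivalent route would be to pair consecutive tail terms so as to sandwich $R_{K-1}(\alpha)$ between $0$ and $a_K$, but this is merely a restatement of the same estimate. Either way the argument is short, and the real content is the clean reindexing that identifies the given polynomial as the truncated Maclaurin series of $g$.
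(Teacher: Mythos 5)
Your proof is correct, and it takes a genuinely different route from the paper's. The paper proves the lemma via Taylor's theorem with the Lagrange-type remainder bound: it sets $\ell(\alpha) = 1 - \exp(-\alpha)$, observes that every derivative of $\ell$ has magnitude $\exp(-x) \leq 1$ on $\lbrack 0, 1 \rbrack$, invokes Taylor's theorem to get $|R'_{K}(\alpha)| \leq \alpha^{K+1}/(K+1)!$ for the $K$th-order Taylor polynomial of $\ell$, and then divides by $\alpha$ to transfer both the polynomial and the remainder bound to $g$. You instead work with the full (everywhere-convergent) Maclaurin series, identify $R_{K-1}(\alpha)$ as the explicit tail $\sum_{k=K}^{\infty} (-1)^k \alpha^k/(k+1)!$, and bound it by the first omitted term via the alternating series estimation theorem, after checking the monotonicity condition $a_{k+1}/a_k = \alpha/(k+2) < 1$. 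Both arguments are short and land on the identical bound; the trade-offs are mild. The Taylor route needs no structure of the coefficients beyond a uniform derivative bound, so it generalizes directly to non-alternating situations and to any $\alpha \geq 0$ without re-examining where the terms start decreasing; your route requires the Leibniz hypotheses (here trivially satisfied since $\alpha < 1$) but gives slightly more for free, namely the exact series representation of the remainder and its sign, $\mathrm{sgn}\,R_{K-1}(\alpha) = (-1)^K$, which the paper's derivative bound does not provide. Your reindexing step matches the algebraic identity the paper also computes when it rewrites $\alpha^{-1}\bigl(1 - \sum_{k=0}^{K} (-1)^k \alpha^k/k!\bigr)$ as $\sum_{k=0}^{K-1} (-1)^k \alpha^k/(k+1)!$, so the only substantive divergence is in how the remainder is controlled.
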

\begin{proof}
	With $\ell(\alpha) := - \exp(-\alpha)$, it is easy to compute that 
	\begin{equation*}
		\dfrac{\mathrm{d}^k\ell}{\mathrm{d}\alpha^k} (x) = (-1)^{k+1} \exp(-x).
	\end{equation*}
	Thus,
	\begin{equation}\label{eq:maxDerivEll}
		\max_{x \in \lbrack 0, 1 \rbrack} \left| \dfrac{\mathrm{d}^{K+1}\ell}{\mathrm{d}\alpha^{K+1}} (x) \right| = 1.
	\end{equation} 
	Taylor's theorem \cite[Theorem~5.15]{rudin1964principles} shows that the $K$th-order Taylor polynomial of $\ell(\alpha)$ around zero has a remainder $R'_{K}(\alpha)$, for which $|R'_{K}(\alpha)| \leq \alpha^{K+1}/(K+1)!$ over $\alpha \in \lbrack 0, 1 \rbrack$ because of~(\ref{eq:maxDerivEll}). The desired $(K-1)$th-order polynomial approximation is:
	\begin{align*}
		\alpha^{-1}(1- \exp(-\alpha)) &= \alpha^{-1} \left( 1 - \sum_{k=0}^K \dfrac{\alpha^k}{k!}(-1)^k - R'_{K}(\alpha) \right) \\
		& = \sum_{k=0}^{K-1} \dfrac{\alpha^k}{(k+1)!} (-1)^{k} + R_{K-1}(\alpha),
	\end{align*}
	and the $(K-1)$th-order remainder is $R_{K-1}(\alpha) := -\alpha^{-1} R'_{K}(\alpha)$ and satisfies $|R_{K-1}(\alpha)| \leq \alpha^{K}/(K+1)!$.
\end{proof}

\subsection[\appendixname~\thesubsection]{Proof of Theorem~\ref{thm:approxThm}}
Using Theorem~\ref{thm:exactCollProp}, $\alpha = n/m$, $1/m = \alpha/n$ and Lemma~\ref{lem:approxExpLimit}, we derive
\begin{align}
	\dfrac{\mathbb{E} \left\lbrack Y^{(n,m)} \right\rbrack}{n} & = 1 - \dfrac{m}{n} \left( 1 - \left( \dfrac{m-1}{m} \right)^n \right) \nonumber \\
	& = 1 - \alpha^{-1} \left( 1 - \left( 1 - \alpha/n \right)^n \right) \nonumber \\
	& = 1 - \alpha^{-1} \left( 1 - \exp(-\alpha) F(\alpha, n) \right). \label{eq:proofTh2one}
\end{align}
For $n \geq 2$ and $\alpha < 1$, $\mu(\alpha, n) := \alpha^2 \sqrt{\dfrac{1}{n^2 - \alpha^2}  \left( \dfrac{\pi^2}{6} - 1 \right) }$ is monotonically decreasing with $n$ and monotonically increasing with $\alpha$, and it is approximately equal to $0.4637 < 1$ for $n=2$ and $\alpha = 1$. We shall use the inequality $1 - x \leq \exp (-x)$ (valid for $x<1$), with $x := \mu(\alpha, n)$---thereby implying $1 - \mu(\alpha, n) \leq \exp (-\mu(\alpha, n))$ because $\mu(\alpha, n) < 1$ for $n \geq 2$. Thus, from (\ref{eq:FanIneqApproxExpLimit}) of Lemma~\ref{lem:approxExpLimit}, we derive
\begin{equation} \label{eq:proofTh2two}
	1 - \alpha^2 \sqrt{\dfrac{1}{n^2 - \alpha^2}  \left(\dfrac{\pi^2}{6} - 1\right) } \leq F(\alpha, n) \leq 1.
\end{equation}
Therefore, by combining (\ref{eq:proofTh2one}) and (\ref{eq:proofTh2two}), we obtain
\begin{equation} \label{eq:upperIneq}
	\dfrac{\mathbb{E} \left\lbrack Y^{(n,m)} \right\rbrack}{n} \leq 1 - \alpha^{-1} \left. \left( 1 - \exp(-\alpha) F(\alpha, n) \right)\right|_{F(\alpha, n) = 1} = 1 - \alpha^{-1} \left( 1 - \exp(-\alpha) \right)
\end{equation}
and
\begin{align}
	\dfrac{\mathbb{E} \left\lbrack Y^{(n,m)} \right\rbrack}{n} & \geq 1 - \alpha^{-1} \left( 1 - \exp(-\alpha) F(\alpha, n) \right) \;\; \text{with} \;\; F(\alpha, n) = 1 - \alpha^2 \sqrt{\dfrac{1}{n^2 - \alpha^2}  \left(\dfrac{\pi^2}{6} - 1\right) } \nonumber \\
	& = 1 - \alpha^{-1} \left( 1 - \exp(-\alpha) \right) - \alpha^{-1} \alpha^2 \exp(- \alpha) \dfrac{1}{\sqrt{n^2 - \alpha^2}} \sqrt{\dfrac{\pi^2}{6}-1} \nonumber \\
	& = 1 - \alpha^{-1} \left( 1 - \exp(-\alpha) \right) - \exp(- \alpha) \sqrt{\dfrac{\alpha^2}{n^2 - \alpha^2}} \sqrt{\dfrac{\pi^2}{6}-1} \nonumber \\
	& \geq 1 - \alpha^{-1} \left( 1 - \exp(-\alpha) \right) - \sqrt{\dfrac{\alpha^2}{n^2 - \alpha^2}} \sqrt{\dfrac{\pi^2}{6}-1}, \label{eq:lowerIneq}
\end{align}
where the last line stems from  $-\exp(-\alpha) \geq - \exp(0) = -1$ for $\alpha \in \lbrack 0, 1 \rbrack$. As a result, combining (\ref{eq:upperIneq}) and (\ref{eq:lowerIneq}), we get
\begin{equation}
	- \sqrt{\dfrac{\alpha^2}{n^2 - \alpha^2}} \sqrt{\dfrac{\pi^2}{6}-1} \leq \dfrac{\mathbb{E} \left\lbrack Y^{(n,m)} \right\rbrack}{n} - \left(1 - \alpha^{-1} \left( 1 - \exp(-\alpha) \right)\right) \leq 0,
\end{equation}
which provides the bounds of the theorem (Equation~(\ref{eq:1stapprox})) for the error term $\delta(\alpha,n)$. Then, Lemma~\ref{lem:approxExpSeries} implies
\begin{align*}
	1 - \alpha^{-1} \left( 1 - \exp(-\alpha) \right) & = 1 - \sum_{k=0}^{K-1} \dfrac{\alpha^k}{(k+1)!} (-1)^{k} - R_{K-1}(\alpha) \\
	& = \sum_{k = 1}^{K-1} \dfrac{\alpha^k}{(k+1)!} (-1)^{k+1} - R_{K-1}(\alpha).
\end{align*}
Injecting this last result into~(\ref{eq:1stapprox}) proves (\ref{eq:2ndapprox}). Deriving~(\ref{eq:3rdapprox}) and~(\ref{eq:ROneTermRelativeIneq}) is straightforward.

\bibliographystyle{IEEEtran}
\bibliography{mybib}

\end{document}